%
\documentclass[runningheads]{llncs}
\usepackage{graphicx}
\usepackage{amsmath, amssymb}

\usepackage{optidef}
\usepackage{bm,url}
\usepackage[ruled,vlined]{algorithm2e}

\newcommand{\deq}{\coloneq}

\usepackage{mathtools}
%
%
%
\begin{document}
\title{Fast solution to the fair ranking problem \\ using the Sinkhorn algorithm}
\titlerunning{Fast solution to the fair ranking problem using the Sinkhorn algorithm}
%
\author{Yuki Uehara\inst{1}
\and
Shunnosuke Ikeda\inst{1}
\and
Naoki Nishimura\inst{1}\inst{2}
\and
Koya Ohashi\inst{3}\and \\
Yilin Li\inst{3}\and
Jie Yang\inst{3}\and
Deddy Jobson\inst{3}\and
Xingxia Zha\inst{3}\and
Takeshi Matsumoto\inst{3}\and \\
Noriyoshi Sukegawa\inst{4}
\and
Yuichi Takano\inst{1}}
%
\authorrunning{Y. Uehara et al.}
%
\institute{University of Tsukuba, Tsukuba, Ibaraki 305-8573 Japan\\
\email{\{s2320413,s2330110\}@u.tsukuba.ac.jp, ytakano@sk.tsukuba.ac.jp}\and 
Recruit Co., Ltd, Chiyoda-ku, Tokyo 100-6640 Japan\\
\email{nishimura@r.recruit.co.jp}\and
Mercari, inc, Minato-ku, Tokyo 106-6118 Japan\\
\email{\{k-ohashi,y-li,j-yang,deddy,giselle.T,takeshi.matsumoto\}@mercari.com}\and
Hosei University, Koganei-shi, Tokyo 184-8584 Japan\\
\email{sukegawa@hosei.ac.jp}}
%
\maketitle              
\begin{abstract}
In two-sided marketplaces such as online flea markets, recommender systems for providing consumers with personalized item rankings play a key role in promoting transactions between providers and consumers. 
Meanwhile, two-sided marketplaces face the problem of balancing consumer satisfaction and fairness among items to stimulate activity of item providers.
Saito and Joachims (2022) devised an impact-based fair ranking method for maximizing the Nash social welfare based on fair division; however, this method, which requires solving a large-scale constrained nonlinear optimization problem, is very difficult to apply to practical-scale recommender systems.
We thus propose a fast solution to the impact-based fair ranking problem. 
We first transform the fair ranking problem into an unconstrained optimization problem and then design a gradient ascent method that repeatedly executes the Sinkhorn algorithm. 
Experimental results demonstrate that our algorithm provides fair rankings of high quality and is about 1000 times faster than application of commercial optimization software.

\keywords{Ranking \and Fairness \and Optimal transport.}
\end{abstract}

\section{Introduction}

A two-sided marketplace is an intermediary economy platform that allows direct interaction between two different user groups. 
As a specific example, we consider an online flea market where providers sell items to consumers. 
In this market, algorithmic recommender systems, which create personalized rankings of attractive items for consumers, play a key role in promoting transactions between providers and consumers.

While most ranking methods have focused particularly on increasing consumer satisfaction, two-sided marketplaces face the problem of balancing consumer satisfaction and fairness among items to stimulate activity of item providers. 
Various definitions, models, and methods have been proposed to ensure fairness in rankings and recommendations; for example, see Pitoura et al.~\cite{pitoura2022fairness} for a systematic survey. 

Most of the prior studies on fair ranking methods have focused on fairness of exposure, which ensures that each item is fairly exposed to consumers~\cite{biega2018equity,singh2018fairness}. 
Patro et al.~\cite{patro2020fairrec} considered fairness of both consumer utility and item exposure in personalized recommendations. 
Togashi et al.~\cite{togashi2024scalable} designed a computationally efficient algorithm to control exposure-based fairness in large-scale recommender systems. 
However, the relationship between item exposure and marketing impacts (e.g., clicks, conversions, and revenues) on items is unclear. 

To ensure fairness of the impacts on recommended items, Saito and Joachims~\cite{saito2022fair} devised an innovative ranking method for maximizing the Nash social welfare (NSW) based on fair division. 
They developed a stochastic ranking policy that guarantees envy-freeness, dominance over uniform rankings, and Pareto optimality. 
However, this policy requires solving a large-scale constrained nonlinear optimization problem. 
Although Saito and Joachims~\cite{saito2022fair} solved the problem using optimization software, it is very difficult to apply this method to practical-scale recommender systems. 

The goal of this paper is to propose a fast solution to the impact-based fair ranking problem. 
For this purpose, we effectively use the Sinkhorn algorithm~\cite{cuturi2013sinkhorn}, which is well known as an efficient solution to the optimal transport problem. 
We first transform the fair ranking problem into an unconstrained optimization problem of computing transport cost matrices that give an optimal solution to the fair ranking problem. 
For this problem, we design a gradient ascent method that repeatedly executes the Sinkhorn algorithm. 

To validate the computational performance of our algorithm, we conducted experiments using synthetic and public real-world datasets. 
Experimental results demonstrate that our algorithm provides fair rankings of high quality and is about 1000 times faster than application of commercial optimization software.

\if0
\begin{itemize}
\item Original paper by Saito and Joachims (SJ)~\cite{saito2022fair} 
\item A future-aware (or long-term) variant of SJ~\cite{yang2023fara}
\item A variant of SJ for a better balance between exposure fairness and ranking performance~\cite{yang2023vertical}
\item Examined existing evaluation measures of fairness in recommender systems~\cite{rampisela2023evaluation}
\item Acceleration of SJ~\cite{togashi2024scalable}
\end{itemize}
\fi



\section{Impact-based fair ranking problem}

We address the impact-based fair ranking problem~\cite{saito2022fair} to provide each consumer with a personalized ranking of items. 
Let $U$ and $I$ be index sets of consumers and items, respectively. 
We consider $\bm{X} \coloneqq (x_{uik})_{(u,i,k) \in U \times I \times [m]} \in \mathbb{R}^{|U| \times|I| \times m}$, where $x_{uik}$ is the probability of item $i \in I$ being ranked at the $k$th position ($k \in [m] \coloneq \{1,2,\ldots,m\}$) for each consumer $u \in U$. 
We formulate a stochastic ranking policy represented by doubly stochastic matrix $\bm{X}_u \coloneqq (x_{uik})_{(i,k) \in I \times [m]} \in \mathbb{R}^{|I| \times m}$ satisfying the following probability constraints: 
\begin{align}
& \sum_{k \in [m]} x_{uik} = 1 \quad (i \in I), \label{eq:dsm1} \\
& \sum_{i \in I} x_{uik} = 1 \quad (k \in [m-1]), \label{eq:dsm2} \\
& x_{uik} \ge 0 \quad (i \in I,~k \in [m]), \label{eq:dsm4}
\end{align}
where the last $m$th position is a dummy (i.e., $\sum_{i \in I} x_{uim} = |I| - m + 1$) to handle a situation where the number of positions is smaller than that of items.  

The impact caused by $\bm{X}_i \coloneqq (x_{uik})_{(u,k) \in U \times [m]} \in \mathbb{R}^{|U| \times m}$ on each item is defined as
\begin{align}
\mathrm{Imp}_i(\bm{X}_i) \deq \sum_{u \in U} \sum_{k \in [m-1]} r(u,i) \cdot e(k) \cdot x_{uik} \quad (i \in I), 
\label{eq:Imp}
\end{align}
where $r(u,i)$ is the relevance of consumer $u$ for item $i$, and $e(k)$ is the exposure probability of the $k$th position. 
The fairness-aware objective function (Nash Social Welfare: NSW) is then defined by the product of the impacts on all items: 
\begin{equation}
F(\bm{X}) \coloneqq \log \prod_{i \in I} \mathrm{Imp}_i(\bm{X}_i) = \sum_{i \in I} \log \mathrm{Imp}_i(\bm{X}_i). 
\label{eq:objF}
\end{equation}
Consequently, the fair ranking problem is formulated as 
\begin{maxi}
  {\bm{X} \in \mathbb{R}^{|U| \times |I| \times m} } {F(\bm{X})}{}{\label{prob:nsw}}
  \addConstraint{}{\mbox{Eqs.~\eqref{eq:dsm1}--\eqref{eq:dsm4}}}{\quad (u \in U).}
\end{maxi}


\section{Gradient ascent method using the Sinkhorn algorithm}

We now describe our algorithm for solving problem~\eqref{prob:nsw}. 
To this aim, we focus on the following optimal transport problem for $u \in U$ with the entropic regularization term~\cite{cuturi2013sinkhorn}: 
\begin{mini}
  {\bm{X}_u \in \mathbb{R}^{|I| \times m}} {\sum_{i \in I} \sum_{k \in [m]} c_{uik} x_{uik} + \varepsilon \sum_{i \in I} \sum_{k \in [m]} x_{uik}(\log x_{uik} - 1)}{}{\label{prob:otp}}
  \addConstraint{}{\mbox{Eqs.~\eqref{eq:dsm1}--\eqref{eq:dsm4}},}
\end{mini}
where $\bm{C}_u \coloneqq (c_{uik})_{(i,k) \in I \times [m]} \in \mathbb{R}^{|I| \times m}$ is a transport cost matrix for $u \in U$, and $\varepsilon > 0$ is a regularization parameter. 
This optimal transport problem can be solved efficiently using the Sinkhorn algorithm~\cite{cuturi2013sinkhorn}. 
Let $\bm{X}^{\star}_u (\bm{C}_u)$ be an optimal solution computed by the Sinkhorn algorithm to problem~\eqref{prob:otp}. 

We now consider searching for transport costs $\bm{C} \coloneqq (\bm{C}_u)_{u \in U}$ such that the corresponding Sinkhorn solution, $\bm{X}^{\star} (\bm{C}) \coloneqq (\bm{X}^{\star}_u (\bm{C}_u))_{u \in U}$, will be optimal for the fair ranking problem~\eqref{prob:nsw}.
Specifically, we transform the constrained optimization problem~\eqref{prob:nsw} into the following unconstrained optimization problem:
\begin{mini}
  {\bm{C} \in \mathbb{R}^{|U| \times |I| \times m}} {F(\bm{X}^{\star} (\bm{C}))}{}{\label{prob:nsw2}}.
\end{mini}

The feasible region of problem~\eqref{prob:nsw2} remains the same in the following sense: 
\begin{theorem} \label{thm:equiv}
    For any $\bm{X}_u$ satisfying Eqs.~\eqref{eq:dsm1}--\eqref{eq:dsm4}, there exists $\bm{C}_u$ such that $\bm{X}_u = \bm{X}_u^{\star}(\bm{C}_u)$. 
\end{theorem}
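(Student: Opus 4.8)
The plan is to exploit the well-known Gibbs-kernel form of the minimizer of the entropically regularized transport problem~\eqref{prob:otp}. Its objective is strictly convex on the positive orthant (the summand $x(\log x-1)$ has second derivative $1/x>0$) and the constraints~\eqref{eq:dsm1}--\eqref{eq:dsm4} are linear, so~\eqref{prob:otp} has a unique minimizer; this minimizer is, by definition, the Sinkhorn output $\bm{X}_u^{\star}(\bm{C}_u)$, and it is a standard fact that it is strictly positive (hence the nonnegativity constraints~\eqref{eq:dsm4} are inactive) and is characterized by the first-order conditions. Writing $\lambda_i$, $\mu_k$ for the multipliers of the row constraints~\eqref{eq:dsm1} and the column constraints~\eqref{eq:dsm2}, stationarity gives $c_{uik}+\varepsilon\log x_{uik}-\lambda_i-\mu_k=0$, i.e.\ $x_{uik}=\exp\bigl((\lambda_i+\mu_k-c_{uik})/\varepsilon\bigr)$.

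Given a target $\bm{X}_u$ satisfying~\eqref{eq:dsm1}--\eqref{eq:dsm4} with all entries strictly positive, I would simply take $c_{uik}\coloneqq-\varepsilon\log x_{uik}$, equivalently the Gibbs kernel $\exp(-c_{uik}/\varepsilon)=x_{uik}$. Then $\bm{X}_u$ satisfies the stationarity conditions above with $\lambda_i=\mu_k=0$, and it is feasible by hypothesis; since the KKT conditions are sufficient for this convex program, $\bm{X}_u$ is its unique minimizer, so $\bm{X}_u^{\star}(\bm{C}_u)=\bm{X}_u$. In Sinkhorn-iteration terms the same point is transparent: with kernel equal to $\bm{X}_u$, the trivial scaling vectors (both equal to $\bm{1}$) already reproduce the prescribed marginals of~\eqref{eq:dsm1}--\eqref{eq:dsm2}, so the Sinkhorn fixed-point equations hold at $\bm{X}_u$ itself.

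The delicate point --- and the step I expect to require the most care --- is target matrices $\bm{X}_u$ on the boundary of the feasible polytope, i.e.\ with $x_{uik}=0$ for some $(i,k)$: then $-\varepsilon\log x_{uik}=+\infty$, and no finite cost matrix can reproduce $\bm{X}_u$, since the entropic minimizer is always strictly positive. I would resolve this by allowing $c_{uik}\in\mathbb{R}\cup\{+\infty\}$ with the convention $\exp\bigl(-(+\infty)/\varepsilon\bigr)=0$, so that the Gibbs kernel has exactly the zero pattern of $\bm{X}_u$; the Sinkhorn recursion then runs on the support of $\bm{X}_u$, which still carries a feasible doubly stochastic matrix (namely $\bm{X}_u$), and the argument of the previous paragraph applies verbatim on that support. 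Alternatively, a perturbation argument works: approximate $\bm{X}_u$ by strictly positive feasible $\bm{X}_u^{(\delta)}$ in the relative interior of the polytope and pass to the limit (with the corresponding costs tending to $+\infty$ on the vanishing entries). Apart from this boundary bookkeeping, the proof is a direct substitution into the optimality conditions of~\eqref{prob:otp}.
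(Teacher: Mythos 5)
Your proposal is correct and uses essentially the same construction as the paper: set $c_{uik} = -\varepsilon \log x_{uik}$ so that the (strictly convex) regularized problem's optimality conditions are satisfied at $\bm{X}_u$ itself. In fact your write-up is more careful than the paper's one-line argument, both in accounting for the Lagrange multipliers of the marginal constraints and in handling feasible $\bm{X}_u$ with zero entries (where $-\varepsilon\log x_{uik}$ is not finite), a boundary case the paper's proof silently glosses over.
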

\begin{proof}
Since the objective function in problem~$(\ref{prob:otp})$ is convex, we
have its optimality condition $c_{uik} + \varepsilon \log x_{uik} = 0$ for $(i,k) \in I \times [m]$.
Hence, for any $\bm{X}_u$, setting $c_{uik} = -\varepsilon \log x_{uik}$ for $(i,k) \in I \times [m]$ makes $\bm{X}_u$ optimal (i.e., $\bm{X}_u = \bm{X}_u^{\star}(\bm{C}_u)$) for problem~$(\ref{prob:otp})$.
\hfill $\square$
\end{proof}

Algorithm~\ref{alg:proposed} describes our gradient ascent method for solving problem~\eqref{prob:nsw2}, where the gradient $\nabla_{\bm{C}} F(\bm{X}^{\star} (\bm{C}))$ can be computed using the chain rule. 
Note also that the Sinkhorn algorithm is compatible with GPU parallel computing. 

\vspace{3mm}
\begin{algorithm}[H]
\caption{Gradient ascent method for solving problem~\eqref{prob:nsw2}}
\label{alg:proposed}

\SetKwInput{KwInput}{Input}
\SetKwInput{KwOutput}{Output}
\SetKwInput{KwInitialize}{Initialize}
\SetKwInput{KwResult}{Result}
\SetKw{KwBreak}{break}
\SetKwFor{ForPar}{for}{do in parallel}{endfor}
\SetKwFor{While}{while}{do}{end}

\SetNlSty{}{}{:}
\SetAlgoNlRelativeSize{0}
\SetAlCapNameFnt{\normalfont\bfseries}
\SetAlCapFnt{\normalfont\bfseries}
\SetAlgoNlRelativeSize{-1}

\KwInput{
  Threshold for optimality \( t > 0 \).
}

\KwInitialize{
    \( \bm{C} \coloneqq (\bm{C}_u)_{u \in U} \in \mathbb{R}^{|U| \times |I| \times m} \) and \( \bm{X} \coloneqq (\bm{X}_u)_{u \in U} \in \mathbb{R}^{|U| \times |I| \times m} \). 
}

\While{\( \|\nabla F(\bm{X})\| > t \)}{
    \ForPar{\( u \in U \)}{
        Compute \( \bm{X}_u = \bm{X}^{\star}_u (\bm{C}_u) \) using the Sinkhorn algorithm. 
    }
    
    
    Update \( \bm{C} \) in the gradient ascent direction \( \nabla_{\bm{C}} F(\bm{X}^{\star} (\bm{C})) \).
}

\KwOutput{
Doubly stochastic matrices \( \bm{X}_u \) for \( u \in U\).
}

\end{algorithm}

\section{Experiments}
In this section, we evaluate the computational performance of our algorithm.

\subsection{Experimental setup}
We used synthetic and public real-world datasets; see  Saito and Joachims~\cite{saito2022fair} for details of these datasets. 
The synthetic dataset was generated following Saito and Joachims~\cite{saito2022fair}. 
The public \texttt{Delicious} dataset was downloaded from the Extreme Classification Repository~\cite{Bhatia16}. 

We use the following evaluation metrics~\cite{saito2022fair}: 
\begin{itemize}
    \item \textbf{User utility}: $\displaystyle \frac{1}{|U|} \sum_{u \in U} \sum_{i\in I} \sum_{k \in [m-1]} r(u,i) \cdot e(k) \cdot X_{uik}$ (larger is better);
    \item  \textbf{Mean max envy}: $\displaystyle \frac{1}{|I|} \sum_{i \in I}  \max_{j \in I} \left(\text{Imp}_i (\bm{X}_j) - \text{Imp}_i (\bm{X}_i) \right)$ (smaller is better);
    \item   \textbf{Items better off (\%)}: Proportion of items for which $\bm{X}$ improves the impact by over 10\% compared to the uniform ranking policy (larger is better);
    \item \textbf{Items worse off (\%)}: Proportion of items for which $\bm{X}$ reduces the impact by over 10\% compared to the uniform ranking policy (smaller is better). 
\end{itemize}

We compare the performance of the following ranking methods\footnote{All these methods were implemented in Python on an Ubuntu 22.04.3 LTS computer equipped with Intel Core i9 12900k CPU 5.2 GHz (128 GB RAM) and NVIDIA GeForce RTX 3090 GPU 1.7 GHz (24 GB RAM). 
Our implementation is available at \url{https://anonymous.4open.science/r/nsw-with-optimal-transport-1C04/}.}: 
\begin{itemize}
\item \textbf{MaxRele}: ranks items in the descending order of relevance;  
\item \textbf{ExpFair(Mosek)}: solves the exposure-based problem~\cite{biega2018equity,singh2018fairness} using Mosek~\cite{mosek};
\item \textbf{NSW(Greedy)}: greedy selects items based on $F(\bm{X})$ for each position; 
\item \textbf{NSW(Mosek)}: solves problem~\eqref{prob:nsw} using Mosek~\cite{mosek};
\item \textbf{NSW(Algo1)}: solves problem~\eqref{prob:nsw2} using our Algorithm~\ref{alg:proposed}; 
\item \textbf{NSW(Algo1+GPU)}: solves problem~\eqref{prob:nsw2} using our Algorithm~\ref{alg:proposed} with GPU, which was used only in this method. 
\end{itemize} 
We implemented the gradient ascent procedure of Algorithm~\ref{alg:proposed} using the PyTorch Adam optimizer~\cite{KingBa15}. 
Note that the results were averaged over five trials. 





\begin{figure}[t]
\centering
\includegraphics[width=0.95\textwidth]{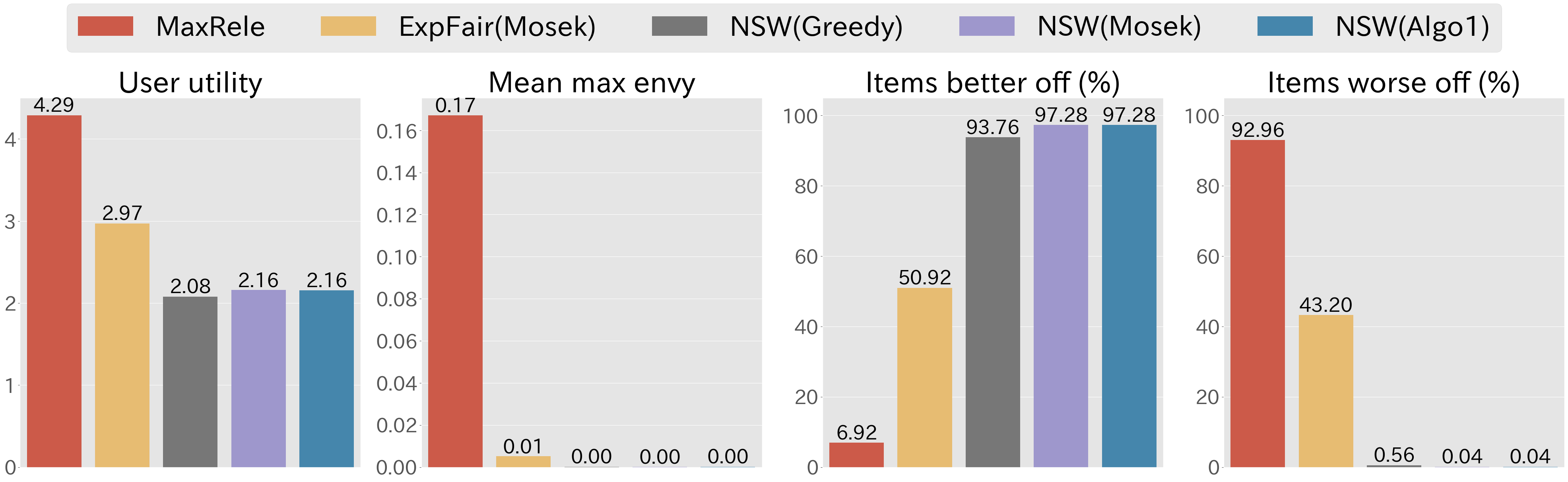}
\caption{Evaluation metrics for the synthetic dataset ($|U|=1000$, $|I|=500$, $m=11$)}
\label{fig:artificial_data}
\end{figure}

\begin{figure}[t]
\centering
\includegraphics[width=0.95\textwidth]{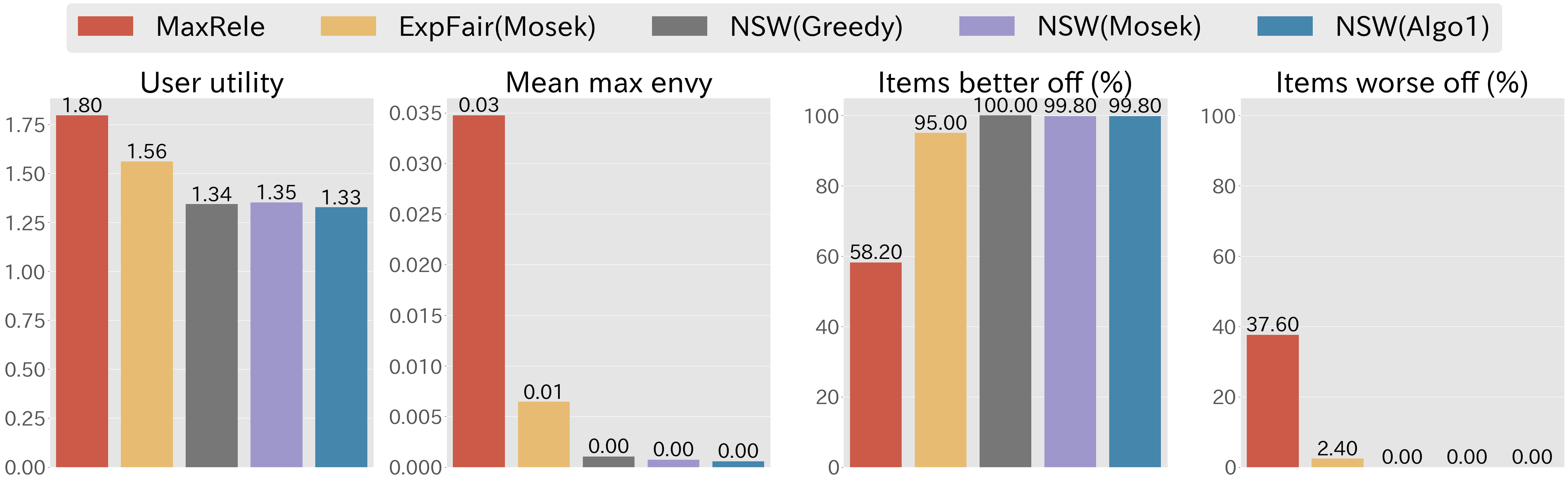}
\caption{Evaluation metrics for the \texttt{Delicious} dataset ($|U|=1014$, $|I|=100$, $m=11$)}
\label{fig:real_world_data}
\end{figure}

\begin{figure}[t]
\centering
\includegraphics[width=0.95\linewidth]{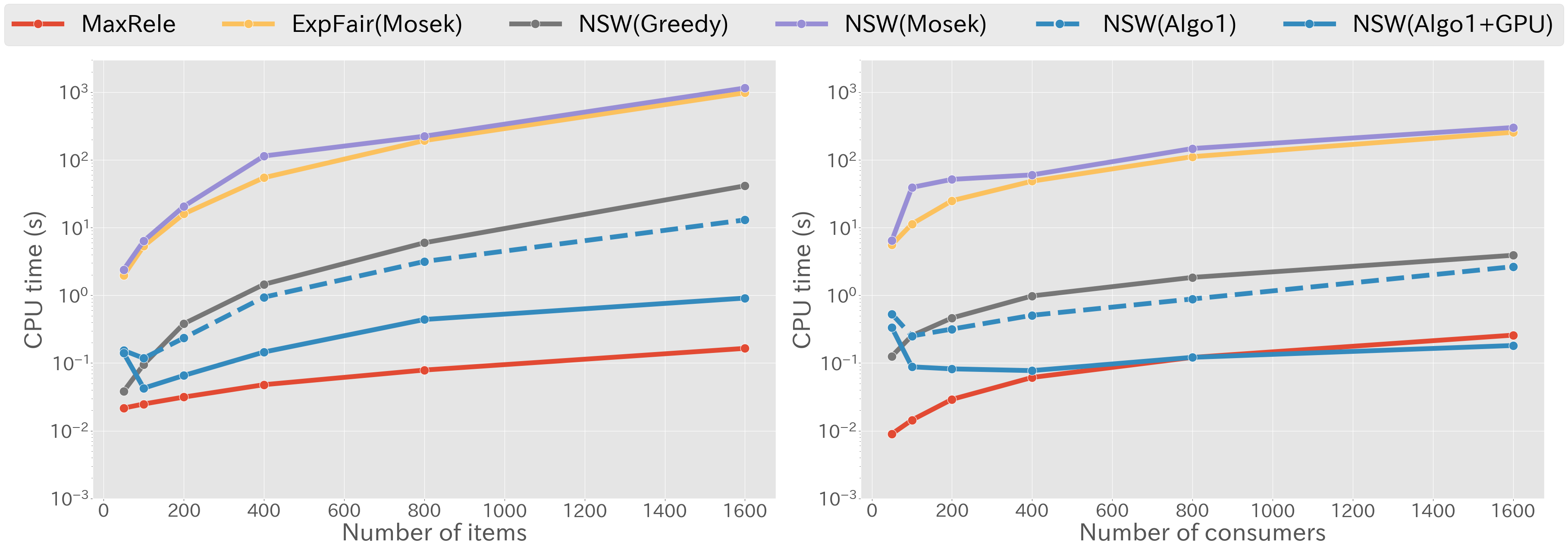}
\caption{Computation times for the synthetic dataset ($|U|=250$, $|I|=250$, $m=11$)}
\label{fig:exec_time}

\end{figure}

\subsection{Results and conclusion}

Figures~\ref{fig:artificial_data} and \ref{fig:real_world_data} show the evaluation results for the synthetic and public \texttt{Delicious} datasets, respectively. 
Our NSW(Algo1) achieved comparable results to NSW(Mosek) and NSW(Greedy). 
In contrast, MaxRele and ExpFair(Mosek) increased user utility, whereas their fairness metrics were very bad. 

Figure~\ref{fig:exec_time} shows the computation times versus the numbers of items and consumers for the synthetic dataset. 
Our NSW(Algo1) was faster than NSW(Greedy) and NSW(Mosek). 
Surprisingly, NSW(Algo1+GPU) was about 10 times faster than NSW(Algo1) and about 1000 times faster than NSW(Mosek). 
Notably, the computation time of NSW(Algo1+GPU) was almost independent of the number of consumers. 

These results demonstrate that our algorithm can provide high-quality solutions significantly faster than other approaches to the impact-based fair ranking problem. 
A future direction of study will be to extend our algorithm to other optimization problems involving doubly stochastic matrices. 



\bibliographystyle{splncs04}
\bibliography{References}

\begin{thebibliography}{10}
\providecommand{\url}[1]{\texttt{#1}}
\providecommand{\urlprefix}{URL }
\providecommand{\doi}[1]{https://doi.org/#1}

\bibitem{mosek}
ApS, M.: The MOSEK optimizer API for Python manual, version 10.2 (2024),
  \url{https://docs.mosek.com/latest/pythonapi/index.html}

\bibitem{Bhatia16}
Bhatia, K., Dahiya, K., Jain, H., Kar, P., Mittal, A., Prabhu, Y., Varma, M.:
  The extreme classification repository: Multi-label datasets and code (2016),
  \url{http://manikvarma.org/downloads/XC/XMLRepository.html}

\bibitem{biega2018equity}
Biega, A.J., Gummadi, K.P., Weikum, G.: Equity of attention: Amortizing
  individual fairness in rankings. In: The 41st International ACM SIGIR
  Conference on Research \& Development in Information Retrieval. pp. 405--414
  (2018)

\bibitem{cuturi2013sinkhorn}
Cuturi, M.: Sinkhorn distances: Lightspeed computation of optimal transport.
  Advances in Neural Information Processing Systems  \textbf{26} (2013)

\bibitem{KingBa15}
Kingma, D., Ba, J.: Adam: A method for stochastic optimization. In:
  International Conference on Learning Representations. San Diega, CA, USA
  (2015)

\bibitem{patro2020fairrec}
Patro, G.K., Biswas, A., Ganguly, N., Gummadi, K.P., Chakraborty, A.: Fairrec:
  Two-sided fairness for personalized recommendations in two-sided platforms.
  In: Proceedings of the Web Conference 2020. pp. 1194--1204 (2020)

\bibitem{pitoura2022fairness}
Pitoura, E., Stefanidis, K., Koutrika, G.: Fairness in rankings and
  recommendations: An overview. The VLDB Journal pp. 1--28 (2022)

\bibitem{saito2022fair}
Saito, Y., Joachims, T.: Fair ranking as fair division: Impact-based individual
  fairness in ranking. In: Proceedings of the 28th ACM SIGKDD Conference on
  Knowledge Discovery and Data Mining. pp. 1514--1524 (2022)

\bibitem{singh2018fairness}
Singh, A., Joachims, T.: Fairness of exposure in rankings. In: Proceedings of
  the 24th ACM SIGKDD International Conference on Knowledge Discovery \& Data
  Mining. pp. 2219--2228 (2018)

\bibitem{togashi2024scalable}
Togashi, R., Abe, K., Saito, Y.: Scalable and provably fair exposure control
  for large-scale recommender systems. arXiv preprint, arXiv:2402.14369  (2024)

\end{thebibliography}
\end{document}